\documentclass[submission,copyright]{eptcs}
 % Name of the event you are submitting to

%\usepackage{breakurl}  % Not needed if you use pdflatex only.
\usepackage[UKenglish]{babel}
\usepackage{microtype}
\usepackage[toc,page]{appendix}
\usepackage[T1]{fontenc}
\usepackage{amsfonts,amsmath,amssymb,amsthm}
\usepackage{tikz} % For figures
	\usetikzlibrary{shapes}
\usepackage{stmaryrd}% for double-brackets
\usepackage[normalem]{ulem} % for underline
\usepackage{calligra}%for some sign
%\usepackage{apalike} % for quotes

%%%%%%% MACROS %%%%%%%
\renewcommand{\epsilon}{\varepsilon}

\newcommand{\termes}{\mathcal{T}}

\newcommand{\barre}[1]{\overline{#1} }

\newcommand{\underAccolade}[2]{\underset{#1}{\underbrace{#2}} }
\newcommand{\variables}{\mathcal{V}}

\newcommand{\substi}[3]{{#3}_{[#1 \mapsto #2]}}
\newcommand{\order}{\text{order}}
\newcommand{\Acc}{\text{Acc}}
\renewcommand{\a}{\textnormal{\texttt{a}}}
\renewcommand{\b}{\textnormal{\texttt{b}}}
\renewcommand{\c}{\textnormal{\texttt{c}}}
\newcommand{\redex}{\text{redex} }
\newcommand{\termi}{\Sigma}
\newcommand{\nontermi}{\mathcal{N}}

\newcommand{\rewrite}{\mathcal{R}}
\newcommand{\valtree}[1]{\| #1 \| }
\newcommand{\nbvar}[1]{\lceil #1 \rceil }
\newcommand{\sem}[1]{\llbracket #1 \rrbracket }

\newcommand{\defin}[1]{\textit{\textbf{#1}}}
%\newcommand{\defin}[1]{{\color{blue} \textit{\textbf{#1}}}}

%%%% THEOREMS %%%%
%\theoremstyle{plain}
%\newtheorem[L]{theorem}{Theorem}
%\newtheorem[L]{prop}[theorem]{Proposition}
%\newtheorem[L]{lemma}[theorem]{Lemma}
%\newtheorem[L]{claim}[theorem]{Claim}
%\newtheorem[S]{definition}{Definition}
%\newtheorem{example}{Example}
%\newtheorem{remark}{Remark}

%\theoremstyle{plain}
\newtheorem{theorem}{Theorem}
\newtheorem{proposition}[theorem]{Proposition}
\newtheorem{lemma}[theorem]{Lemma}

\title{IO vs OI in Higher-Order Recursion Schemes}
\author{Axel Haddad 
\institute{LIAFA (Universit\'e Paris 7 \& CNRS) \quad \& \quad LIGM (Universit\'e Paris Est \& CNRS)}
}

\begin{document}
\maketitle

\begin{abstract}
We propose a study of the modes of derivation of higher-order recursion schemes, proving that value trees obtained from schemes using innermost-outermost derivations (IO) are the same as those obtained using unrestricted derivations. 

Given that higher-order recursion schemes can be used as a model of functional programs, innermost-outermost derivations policy represents a theoretical view point of call by value evaluation strategy.

%Finally, we show that this technique of making a scheme correcting itself can be used to improve efficiency of derivation, for example by preventing useless derivations, i.e. derivations that would not produce anything in the value tree. 
\end{abstract}

\section{Introduction}
	Recursion schemes have been  first considered as a model of computation,
representing the syntactical aspect of a recursive program \cite{Nivat72,Courcelle78a,Courcelle78b,CN78} . At first, (order-1) schemes were modelling simple recursive programs whose
functions only take values as input (and not functions). Since, higher-order 
versions of recursion schemes  \cite{Indermark76,Damm77a,Damm77b,Damm82,ES77,ES78} have been studied. 

More recently, recursion schemes were studied as generators of infinite ranked trees and the focus was on deciding logical properties of those trees \cite{KNU02,ES77,HMOS08,Aehlig06,Kobayashi09,KO09}.

As for programming languages, the question of the
evaluation policy has been widely studied. Indeed, different policies results in the different evaluation \cite{ES77, ES78,Damm82}.  There are two main evaluations
policy for schemes: outermost-innermost
derivations (OI) and inner-outermost IO derivations, respectively corresponding to call by need
and call by value in programming languages.

Standardization theorem for the lambda-calculus shows that for any
scheme, outermost-innermost
derivations (OI) lead to the same tree as unrestricted derivation. However, this
is not the case for IO derivations. In this paper we prove that the situation is different for schemes. Indeed, we establish that the trees produced using schemes with IO policy are the same as those produced
using schemes with OI policy. 
For a given a scheme of order $n$, we can use a simplified continuation passing style transformation, to get a new scheme of order $n+1$ in which IO derivations will be the same as OI derivations in the initial scheme (Section~\ref{OI2IO}). 
Conversely, in order to turn a scheme into another one in which unrestricted
derivations lead to the same tree as IO derivations in the initial scheme,
we adapt Kobayashi's~\cite{Kobayashi09} recent results on HORS model-checking, to compute some
key properties over terms (Section~\ref{koba}). Then we embed these properties into
a scheme turning it into a self-correcting scheme of the same order of the initial scheme, in which OI and IO
derivations produce the same tree (Section~\ref{selfcorrecting}).

\section{Preliminaries}
	\defin{Types} are defined by the grammar  $\tau ::= o \ |\  \tau \rightarrow \tau$;  $o$ is called the \defin{ground type}. Considering that $\rightarrow$ is associative to the right (i.e. $\tau_1 \rightarrow (\tau_2 \rightarrow \tau_3)$ can be written $\tau_1 \rightarrow \tau_2 \rightarrow \tau_3$), any type $\tau$ can be written uniquely as $\tau_1 \rightarrow ... \rightarrow \tau_k \rightarrow o$. The integer $k$ is called the \defin{arity} of $\tau$. We define the \defin{order of a type} by $\order(o) = 0$ and $\order (\tau_1 \rightarrow \tau_2) = \max (\order(\tau_1)+1,\order(\tau_2))$. For instance  $o \rightarrow o \rightarrow o \rightarrow o$ is a type of order 1 and arity 3, $(o\rightarrow o) \rightarrow ( o \rightarrow o )$, that can also be written $(o\rightarrow o) \rightarrow o \rightarrow o$  is a type of order $2$. Let $\tau^\ell\rightarrow \tau'$ be a shortcut for $ \underAccolade{\ell \ times}{\tau \rightarrow ... \rightarrow \tau} \rightarrow \tau'$.

Let $\Gamma$ be a finite set of symbols such that to each symbol is associated a type. Let $\Gamma^\tau$ denote the set of symbols of type $\tau$. For all type $\tau$, we define the set of \defin{terms} of type $\termes^\tau(\Gamma)$ as the smallest set satisfying:  $\Gamma^\tau \subseteq \termes^\tau(\Gamma)$ and $ \bigcup_{\tau'} \{ t \ s \ | \ t\in\termes^{\tau' \rightarrow \tau} (\Gamma), s \in\termes^{\tau'} (\Gamma)\} \subseteq  \termes^\tau(\Gamma)$. If a term $t$ is in $\termes^\tau(\Gamma)$, we say that $t$ has type $\tau$. We shall write $\termes(\Gamma)$ as the set of terms of any type, and $t:\tau$ if $t$ has type $\tau$. The arity of a term $t$, $arity(t)$, is the arity of its type.
Remark that any term $t$ can be uniquely written as $t=\alpha \ t_1...t_k$ with $\alpha\in \Gamma$. We say that $\alpha$ is the \defin{head} of the term $t$. For instance, let $\Gamma = \{ F : (o\rightarrow o) \rightarrow o \rightarrow o \ ,\ G : o\rightarrow o \rightarrow o \ , \ H : (o \rightarrow o) \ , \ \a: o \}$: $F\ H$ and $G \ \a$ are terms of type $o \rightarrow o$;  $F (G \ \a) \ (H \ (H \ a))$ is a term of type $o$; $F \ \a$ is not a term since $F$ is expecting a first argument of type $o \rightarrow o$ while $\a$ has type $o$.\medskip

%%%%%%%%%%%%%%  CONTEXT %%%%%%%%%%%%%%%%%%%%

Let $t:\tau$, $t':\tau'$ be two terms, $x:\tau'$ be a symbol of type $\tau'$, then we write $\substi{x}{t'}{t}:\tau$ the term obtained by substituting all occurences of $x$ by $t'$ in the term $t$. A \defin{$\tau$-context} is a term $C[\bullet^\tau] \in \termes(\Gamma \uplus \{ \bullet^\tau:\tau  \} )$ containing exactly one occurrence of $\bullet^\tau$; it can be seen as an application turning a term into another, such that for all $t:\tau$, $C[t] = \substi{\bullet^\tau}{t}{C[\bullet^\tau]}$. In general we will only talk about ground  type context where $\tau=o$ and we will omit to specify the type when it is clear. For instance, if $C[\bullet] = F \bullet \ (H \ (H \ a))$ and $t'= G\ \a$ then $C[t'] = F \ (G \ \a) \ (H\ (H \ \a))$.\medskip

%%%%%% TREE %%%%%%

Let $\termi$ be a set of symbols of order at most $1$ (i.e. each symbols has type $o$ or $o \rightarrow ... \rightarrow o$) and $\bot:o$ be a fresh symbol. A \defin{tree} $t$ over $\termi \uplus{\bot}$ is a mapping $t : dom^t \rightarrow \termi \uplus{\bot}$, where $dom^t$ is a prefix-closed subset of $\{1,...,m\}^*$ such that if $u\in dom^t$ and $t(u)=a$ then $\{ j \ | \ uj \in dom^t\} = \{ 1,...,arity(a)\}$. Note that there is a direct bijection between ground terms of $\termes^o(\termi \uplus{\bot})$ and finite trees . Hence we will freely allow ourselves to treat ground terms over $\termi \uplus{\bot}$ as trees. We define the partial order $\sqsubseteq$ over trees as the smallest relation satisfying $\bot \sqsubseteq t$ and $t\sqsubseteq t$ for any tree $t$, and $a \ t_1 ... t_k \sqsubseteq a \ t'_1 ... t'_k$ iff $t_i \sqsubseteq t'_i$. Given a (possibly infinite) sequence of trees $t_0,t_1,t_2,...$ such that $t_i \sqsubseteq t_{i+1}$ for all $i$, one can prove that the set of all $t_i$ has a supremum that is called the \defin{limit tree} of the sequence.\medskip

A \defin{higher order recursion scheme (HORS)} $G = \langle \variables, \termi , \nontermi , \rewrite, S \rangle $ is a tuple such that: $\variables$ is a finite set of typed symbols called \defin{variables}; $\termi$ is a finite set of typed symbols of order at most 1, called the \defin{set of terminals}; $\nontermi$ is a finite set of typed symbols called \defin{set of non-terminals}; $\rewrite$ is a set of \defin{rewrite rules}, one per non terminal $F: \tau_1\rightarrow ... \rightarrow \tau_k \rightarrow o \in \nontermi$, of the form  $F \ x_1\  ... \ x_k \ \rightarrow \ e$ with $e:o \in \termes(\termi \uplus\nontermi \uplus \{x_1,...,x_k\})$; $S \in \nontermi$ is  the \defin{initial non-terminal}.

 We define the \defin{rewriting relation $\rightarrow_G$}  $\in \ \termes (\termi \uplus \nontermi)^2$ (or just $\rightarrow$ when $G$ is clear) as $t \rightarrow_G t'$ iff there exists a context $C[\bullet]$, a rewrite rule $ F\ x_1 ... x_k \rightarrow e$, and a term $F \ t_1 \ ... \ t_k:o$ such that $t = C[ F \ t_1 ... t_k ]$ and $t'= C [ e_{[x_1 \mapsto t_1]...[x_k \mapsto t_k]}]$. We call $F \ t_1 \ ... \ t_k:o$ a \defin{redex}. Finally we define $\rightarrow_G^*$ as the reflexive and transitive closure of $\rightarrow_G$. 

We define inductively the \defin{$\bot$-transformation} $(\cdot)^\bot  \ : \ \termes^o(\nontermi \uplus \termi) \rightarrow  \termes^o( \termi \uplus \{ \bot : o \})$:
$( F \ t_1 \ ...\  t_k )^\bot  \ = \ \bot \ \forall F \in \nontermi$ and $(a \ t_1 \ ... \ t_k)^\bot  \ = \  a\  t_1^\bot ... t_k^\bot $ for all $a \in \termi$. We define a \defin{derivation}, as a possibly  infinite sequence of terms linked by the rewrite relation. Let  $t_0=S \rightarrow_G t_1 \rightarrow_G t_2 \rightarrow_G ...$ be a derivation, then one can check that $(t_0)^\bot \sqsubseteq (t_1)^\bot \sqsubseteq (t_2)^\bot \sqsubseteq ...$, hence it admits a limit. One can prove that the set of all such limit trees has a greatest element that we denote $\valtree{G}$ and refer to as the \defin{value tree} of $G$. Note that $\valtree{G}$ is the supremum of $\{ t^\bot \ | \ S \rightarrow^* t \}$. Given a term $t:o$, we denote by $G_t$ the scheme obtained by transforming $G$ such that  it starts derivations with the term $t$, formally, $G_t = \langle \variables,\termi,\nontermi\uplus\{S'\}, \rewrite\uplus\{S' \rightarrow t\},S'\rangle$. One can prove that if $t \rightarrow t'$ then $\valtree{G_t}=\valtree{G_{t'}}$.\medskip

\noindent \textbf{Example}. Let $G = \langle \variables, \termi , \nontermi , \rewrite, S \rangle $ be the scheme such that: $\variables = \{ x:o, \phi:o\rightarrow o, \psi:(o\rightarrow o) \rightarrow o \rightarrow o \}$, $\termi = \{ \a : o^3 \rightarrow o, \b : o \rightarrow o \rightarrow o, \c: o\}$, $\nontermi = \{F : \big ( (o \rightarrow o) \rightarrow o \rightarrow o \big ) \rightarrow (o\rightarrow o) \rightarrow o \rightarrow o, H : (o \rightarrow o) \rightarrow o \rightarrow o, I,J,K: o \rightarrow o, S : o\}$, and $\rewrite$ contains the following rewrite rules:
\begin{equation*}
\begin{array}{lclclclclcl}
F  \ \psi \ \phi \ x & \rightarrow & \psi \ \phi \ x & \quad &
I \ x & \rightarrow & x & \quad &
H \ \phi \ x & \rightarrow  & \a \ (J \ x) \ (K \ x) \ (\phi \ x) \\
J \ x & \rightarrow & \b \ (J\ x) \ (J\ x) & \quad &
K \ x & \rightarrow & K \ (K \ x) & \quad &
S & \rightarrow  & F \ H \ I  \ \c 
\end{array}
  \end{equation*}

Here is an example of finite derivation:
\begin{multline*}
S  \quad \rightarrow \quad  F \ H \ I  \ \c \quad
  \rightarrow \quad  H \ I \ \c \quad
  \rightarrow \quad \a \ (J \ \c) \ (K \ \c) \ (I \ \c) \\
   \rightarrow \quad \a \ (J \ \c) \ (K \ (K \ \c)) \ (I \ \c) \quad
     \rightarrow  \quad \a \ (J \ \c) \ (K \ (K \ (K \ \c))) \ (I \ \c) 
\end{multline*}
If one extends it by always rewriting a redex of head $K$, its limit is the tree $\a \ \bot \ \bot \ \bot$, but this is not the value tree of $G$. The value tree $\valtree{G}$ is depicted below.

\begin{center}
\begin{tikzpicture}
\begin{scope}[scale = 0.6]
\node (1) at (0.5,0.5) {...};
\node (2) at (1,2) {\b};
\node (3) at (1.5,0.5) {...};
\node (4) at (2,3) {\b};
\node (5) at (2.5,0.5) {...};
\node (6) at (3,2) {\b};
\node (7) at (3.5,0.5) {...};
\node (8) at (4,4) {\b};
\node (9) at (4.5,0.5) {...};
\node (10) at (5,2) {\b};
\node (11) at (5.5,0.5) {...};
\node (12) at (6,3) {\b};
\node (13) at (6.5,0.5) {...};
\node (14) at (6.5,5) {\a};
\node (15) at (7,2) {\b};
\node (16) at (7.5,0.5) {...};
\node (17) at (8,4) {$\bot$};
\node (18) at (10,4) {\c};

\draw (14) -- (8);
\draw (14) -- (17);
\draw (14) -- (18);

\draw (8) -- (4);
\draw (8) -- (12);

\draw (4) -- (2);
\draw (4) -- (6);
\draw (12) -- (10);
\draw (12) -- (15);

\draw (2) -- (1);
\draw (2) -- (3);
\draw (6) -- (5);
\draw (6) -- (7);
\draw (10) -- (9);
\draw (10) -- (11);
\draw (15) -- (13);
\draw (15) -- (16);
\end{scope}
\end{tikzpicture}
\end{center}

\subsection*{Evaluation Policies}

We  now put constraints on the derivations we allow. If there are no constraints, then we say that the derivations are unrestricted and we let $\Acc^G = \{ t:o \ |\  S \rightarrow^* t\}$ be the set of accessible terms using unrestricted derivations. Given a rewriting $t \rightarrow t'$ such that $t = C[F\ s_1 \ ... \ s_k ] $ and $t' = C[ \substi{\forall j \ x_j}{s_j}{e} ]$ with $F\ x_1  ...  x_k   \rightarrow  e$$\in \rewrite$.
\begin{itemize}
\item We say that $t \rightarrow t'$ is an \defin{outermost-innermost} (OI) rewriting (written $t \rightarrow_{OI} t'$) there is no \redex containing the occurrence of $\bullet$ as a subterm of $C[\bullet]$.
\item We say that $t \rightarrow t'$ is an \defin{innermost-outermost} (IO) rewriting (written $t \rightarrow_{IO} t'$), if for all $j$ there is no \redex as a subterm of $s_j$.
\end{itemize}

Let $\Acc^G_{OI} = \{ t:o \ |\  S \rightarrow^*_{OI} t\}$ be the set of accessible terms using OI derivations and $\Acc^G_{IO} = \{ t:o \ |\  S \rightarrow^*_{IO} t\}$ be the set of accessible terms using IO derivations. There exists a supremum of $\Acc^G_{OI}$ (resp. $\Acc^G_{IO}$) which is the maximum of the limit trees of $OI$ derivations(resp. $IO$ derivations). We write it $\valtree{G}_{OI}$ (resp. $\valtree{G}_{IO}$). For all recursive scheme $G$, $(\Acc^G)^\bot = (\Acc^G_{OI})^\bot$, in particular $\valtree{G}_{OI} = \valtree{G}$. But $\valtree{G}_{IO} \sqsubseteq \valtree{G}$ and in general, the equality does not hold (see the example is the next section).
\section{From OI to IO}\label{OI2IO}
	
 Fix a recursion scheme $G = \langle \variables, \termi , \nontermi , \rewrite, S \rangle $. Our goal is to define another scheme $\barre{G} =  \langle \barre{\variables}, \termi , \barre{\nontermi} , \barre{\rewrite}, I \rangle $ such that $\valtree{\barre{G}}_{IO}=\valtree{G}$. The idea is to add an extra argument ($\Delta$) to each non terminal, that will be required to rewrite it (hence the types are changed). We feed this argument to the outermost non terminal, and duplicate it to subterms only if the head of the term is a terminal. Hence all derivations will be IO-derivations.

We define the \defin{$\barre{( \cdot )}$ transformation} over types by $\barre{o} = o \rightarrow o$, and $\barre{\tau_1 \rightarrow \tau_2}=\barre{\tau_1} \rightarrow \barre {\tau_2}$. In particular, if $\tau = \tau_1 \rightarrow ... \rightarrow \tau_k \rightarrow o$ then $\barre{\tau} = \barre{\tau_1} \rightarrow ... \rightarrow \barre{\tau_k} \rightarrow o \rightarrow o$. Note that for all $\tau$, $\order(\barre{\tau}) = \order(\tau)+1$.

For all $x:\tau \in \variables$ we define $\barre{x} : \barre{\tau}$ as a fresh variable. Let $ar_{max}$ be the maximum arity of terminals, we define $\eta_1,...,\eta_{arity_{max}}:o \rightarrow o$ and $\delta:o$ as fresh variables, and we let $\barre{\variables} = \{\barre{x}:\barre{\tau} \ | \ x\in \variables \} \uplus \{\eta_1,...,\eta_{ar_{max}}\}\uplus\{ \delta : o \}$. Note that $\delta$ is the only variable of type $o$.  For all $a:\tau \in \termi$ define $\barre{a} : \barre{\tau}$ as a fresh \textbf{non-terminal} and for all $F:\tau \in \nontermi$ define $\barre{F} : \barre{\tau}$ as a fresh non-terminal. Let $\barre{\nontermi} = \{\barre{a}:\barre{\tau} \ | \ a\in \termi \} \uplus \{\barre{F}:\barre{\tau} \ | \ F\in \nontermi \} \uplus \{ \Delta : o, I : o \}$. Note that $I$ and $\Delta$ are the only symbols in  $\barre{\nontermi}$ of type $o$.

Let $t : \tau \in \termes(\variables \uplus \termi \uplus \nontermi)$, we define inductively the term $\barre{t} :\barre{\tau}  \in \termes(\barre{\variables} \uplus \barre{\nontermi})$: If $t = x \in \variables$ (resp. $t = a \in \termi$, $t = F \in \nontermi$), we let $\barre{t} = \barre{x} \in \barre{\variables}$ (resp. $\barre{t} = \barre{a} \in \barre{\termi}$, $\barre{t} = \barre{F} \in \nontermi$),  if $t = t_1 \ t_2:\tau$ then $ \barre{t} = \barre{t_1} \ \barre{t_2}$.

 Let $F \ x_1 \ ... \ x_k \rightarrow e$ be a rewrite rule of $\rewrite$. We define the (valid) rule $ \barre F \ \barre {x_1} \ ... \ \barre {x_k}\ \delta\  \rightarrow \ \barre {e} \ \Delta$ in $\barre \rewrite$. Let $a \in \termi$ of arity $k$, we define the rule $\barre a \ \eta_1 \ ... \ \eta_k \ \delta \ \rightarrow \ a \ (\eta_1 \ \Delta) \ ... \  (\eta_k \ \Delta)$ in $\barre \rewrite$. We also add the rule $I \ \rightarrow \ \barre{S} \ \Delta$ to $\barre{\rewrite}$.
 Finally let $\barre G = \langle \barre{\variables}, \termi , \barre{\nontermi} , \barre{\rewrite}, I \rangle $.\medskip

\noindent\textbf{Example}. Let $G = \langle \variables, \termi , \nontermi , \rewrite, S \rangle $ be the order-1 recursion scheme with $\termi = \{ \a,\c : o \}$, $\nontermi = \{ S:o, F: o \rightarrow o \rightarrow o, H : o \rightarrow o\}$, $\variables = \{x,y:o\}$, and the following rewrite rules:
\[ \begin{array}{ccccccccccc}
	S& \rightarrow & F \ (H \ \a) \ \c &\quad &
	F \ x \ y  & \rightarrow & y &\quad &
	H \ x & \rightarrow &  H \ (H \ x) 
	\end{array} \]
Then we have $\valtree{G}_{OI}=\c$ while $\valtree{G}_{IO} = \bot$ (indeed, the only IO derivation is the following $S \rightarrow F \ (H  a) \ \c \rightarrow F \ (H \ (H \ a)) \ \c \rightarrow F \ (H \ (H \ (H \ a))) \ \c \rightarrow ...$). The order-2 recursion scheme  $\barre G = \langle \barre \variables,  \termi , \barre \nontermi , \barre \rewrite, I \rangle $ is given by $\barre \nontermi = \{ I,\Delta : o , \barre S, \barre \a,\barre \c :o \rightarrow o, \barre F: (o \rightarrow o) \rightarrow (o \rightarrow o) \rightarrow o \rightarrow o, \barre H : (o \rightarrow o) \rightarrow o \rightarrow o\}$,$\barre \variables = \{\delta : o, \barre x,\barre y:o \rightarrow o\}$ and the following rewrite rules:
\[	\begin{array}{l c l c l c l c l c l}
	I & \rightarrow & \barre S \  \Delta & \quad &
	\barre S\ \delta & \rightarrow & \barre F \ (\barre H \ \barre a) \  \barre c \ \Delta & \quad &
	\barre F \ \barre x \ \barre y \ \delta & \rightarrow & \barre y \ \Delta\\
	\barre H \ \barre x \ \delta & \rightarrow &  \barre H \ (\barre H \ \barre x) \ \Delta & \quad &
	\barre \c \ \delta & \rightarrow & \c & \quad &
	\barre \a \ \delta & \rightarrow &  \a \\
	\end{array}	\]
Note that in the term $\barre F \ (\barre H \ \barre \a) \  \barre \c \ \Delta$, the subterm $\barre H \ \barre \a$ is no longer a redex since it lacks its last argument, hence it cannot be rewritten, then the only IO derivation, which is the only unrestricted derivation is $ I \rightarrow \barre S \ \Delta \rightarrow \barre F \ (\barre H \ \barre \a) \  \barre \c \ \Delta \rightarrow \barre \c \ \Delta \rightarrow \c$. Therefore $\valtree{\bar G}_{IO} = \valtree{\bar G}=\c=\valtree{G}$.

%%%%%%%%%%%%%%%%%%%%%%%%%

\begin{lemma}
Any derivation of $\barre G$ is in fact an OI \textbf{and} an IO derivation. Hence that $\valtree{\barre G}_{IO}= \valtree{\barre G}$.\label{barreGIOOI}\end{lemma}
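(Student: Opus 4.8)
The plan is to pin down the shape of the terms reachable from $I$ in $\barre G$ so tightly that every rewriting step is seen at once to be both outermost and innermost. Concretely, I would prove that every $t$ with $I \rightarrow^*_{\barre G} t$ is either $I$ itself or can be written uniquely as $t = C[r_1, \dots, r_n]$ with $n \ge 0$, where $C$ is a finite tree built only from terminals of $\termi$ (with $n$ holes, all at leaf positions), and each $r_j$ has the form $\barre{u_j}\,\Delta$ for some \emph{closed} ground term $u_j \in \termes^o(\termi \uplus \nontermi)$. Call $\mathcal C$ the set of such terms. The crucial feature of this shape is that the last argument handed to the head of each $r_j$ is the bare non-terminal $\Delta$, never a compound term: in every rule of $\barre\rewrite$ the right-hand side is $\barre e\,\Delta$ (or $a\,(\eta_1\Delta)\cdots(\eta_k\Delta)$), and the $\Delta$ appearing there is a non-terminal, distinct from the bound variable $\delta$, so it can never be instantiated by a redex.

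The technical core is a ``frozen subterm'' observation: for any $u \in \termes(\termi \uplus \nontermi)$, the term $\barre u$ contains no $\barre G$-redex. This is a one-line induction on $u$ --- a $\barre G$-redex requires some $\barre a$ or $\barre F$ applied to $\arite+1$ arguments in order to reach ground type, whereas in $\barre u$ every symbol $\barre\beta$ carries exactly as many arguments as $\beta$ does in the well-typed term $u$, hence at most $\arite(\beta)$ --- and similarly $\Delta$ heads no rule, so it too contains no redex. The only other facts needed are the routine commutation identities $\barre{C[s]} = \barre C[\,\barre s\,]$ and $\barre{\substi{x}{s'}{s}} = \substi{\barre x}{\barre{s'}}{\barre s}$, which are proved by straightforward structural induction.

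With this in hand I would verify that $\mathcal C$ is closed under $\rightarrow_{\barre G}$ and locate its redexes. In $t = C[r_1, \dots, r_n] \in \mathcal C$, the nodes of $C$ are terminals of $\termi$, which head no $\barre G$-rule, so $C$ contributes no redex; and inside each $r_j = \barre{u_j}\,\Delta$ the arguments of the head $\barre\beta$ are bar-images (redex-free by the frozen-subterm observation) together with the symbol $\Delta$ (redex-free), while $r_j$ itself is a redex because $u_j : o$ closed forces its head $\beta$ to be fully applied. Hence the redexes of $t$ are \emph{exactly} $r_1, \dots, r_n$; they occur at pairwise incomparable leaf-level ground positions, none is an ancestor of another, and none carries a redex among its arguments --- which is precisely the OI condition and the IO condition taken together. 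So every step out of a term of $\mathcal C$ is simultaneously a $\rightarrow_{OI}$ step and a $\rightarrow_{IO}$ step. That the step stays in $\mathcal C$ is a two-case check: rewriting $r_j$ with a non-terminal head $\barre F$ turns it, via the substitution identity, into $\barre{u_j'}\,\Delta$ with $u_j \rightarrow_G u_j'$ (still closed and ground), and rewriting $r_j$ with a terminal head $\barre a$ of arity $m$ turns it into $a\,(\barre{v_1}\Delta)\cdots(\barre{v_m}\Delta)$, which absorbs $a$ into the tree context and creates $m$ fresh subterms of the prescribed form; the base step $I \rightarrow \barre S\,\Delta$ is vacuously OI and IO and lands in $\mathcal C$.

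It then follows by induction on $i$ that every term $t_i$ of a derivation $I = t_0 \rightarrow t_1 \rightarrow \dots$ of $\barre G$ lies in $\mathcal C$, whence each step $t_i \rightarrow t_{i+1}$ is simultaneously an $\rightarrow_{OI}$ and an $\rightarrow_{IO}$ step, so the derivation is both an OI and an IO derivation. Consequently $\Acc^{\barre G} = \Acc^{\barre G}_{OI} = \Acc^{\barre G}_{IO}$, and passing to suprema of $\bot$-transforms gives $\valtree{\barre G}_{IO} = \valtree{\barre G}$. I expect the only real difficulty to be stating the invariant $\mathcal C$ precisely enough --- in particular the fact that the freshly supplied argument of every redex is the bare symbol $\Delta$ and that all other arguments are frozen bar-images, since this is exactly the mechanism by which the extra $\delta/\Delta$ argument enforces call-by-value evaluation; the boundary cases (the initial symbol $I$, nullary terminals) cost an extra line each but are otherwise routine.
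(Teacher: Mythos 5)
Your proposal is correct and follows essentially the same route as the paper's proof sketch: the invariant $\mathcal{C}$ (terminal context over subterms $\barre{u_j}\,\Delta$ with the frozen-subterm observation that bar-images are under-applied and hence redex-free) is exactly a rigorous formulation of the paper's point that the only redexes are those carrying $\Delta$ as last argument, that $\Delta$ stays only at outermost non-terminals, and that redex arguments contain no $\Delta$ and hence no redex. No gaps; your version merely spells out the reachability induction that the paper leaves implicit.
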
\begin{proof}
[Proof (Sketch)] The main idea is that the only redexes will be those that have $\Delta$ as last argument of the head non-terminal. The scheme is constructed so that $\Delta$ remains only on the outermost non-terminals, that is why any derivation is an OI derivation. Furthermore, we have that if $t= \barre F \ t_1 ... t_k \Delta$ is a redex, then none of the $t_i$ contains $\Delta$, therefore they do not contain any redex, hence $t$ is an innermost redex.\end{proof}

Note that $OI$ derivations in $\barre G$ acts like $OI$ derivations in $G$, hence $\valtree{G} = \valtree{\barre G}$.
\begin{theorem}[OI vs IO]
Let $G$ be an order-$n$ scheme. Then one can construct an order-$(n+1)$ scheme $\barre G$ such that $\valtree{G} = \valtree{\barre G}_{IO}$.\label{OI vs IO}
\end{theorem}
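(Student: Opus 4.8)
The plan is to deduce the statement from the pieces already assembled, so most of the work is bookkeeping. First the order. By construction the non-terminals of $\barre{G}$ are the $\barre{a}$ ($a\in\termi$), the $\barre{F}$ ($F\in\nontermi$), and $I,\Delta$; since $\order(\barre{\tau})=\order(\tau)+1$, each $\barre{F}$ has order one more than the corresponding $F$ (hence at most $n+1$), each $\barre{a}$ has order at most $2$ (terminals have order at most $1$), and $I,\Delta$ have order $0$; so $\barre{G}$ has order $n+1$ (for $n\ge 1$; for $n=0$ one only gets order at most $2$). By Lemma~\ref{barreGIOOI} we already know $\valtree{\barre{G}}_{IO}=\valtree{\barre{G}}$, so it remains to prove $\valtree{\barre{G}}=\valtree{G}$, and since $\valtree{G}=\valtree{G}_{OI}$ it is enough to match $OI$ derivations of $G$ with derivations of $\barre{G}$ (all of which are $OI$ by the lemma).

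For that matching I would first record a combinatorial fact: in any ground term $t:o$ over $\termi\uplus\nontermi$, the $OI$-redexes are exactly the maximal non-terminal-headed subterms, and each of them lies strictly below terminal nodes only -- indeed any non-terminal occurring above such a subterm must eventually be fully applied, producing a redex that contains the subterm and so forbids it from being $OI$. Then I would introduce the push-down map $\Phi$ on ground terms: $\Phi(F\ t_1\cdots t_k)=\barre{F}\ \barre{t_1}\cdots\barre{t_k}\ \Delta$ for $F\in\nontermi$, and $\Phi(a\ t_1\cdots t_k)=a\ \Phi(t_1)\cdots\Phi(t_k)$ for $a\in\termi$. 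One checks that $\barre{t}\ \Delta\rightarrow^*_{\barre{G}}\Phi(t)$ (by firing $\barre{a}$-rules down the terminal prefix), that $(\Phi(t))^\bot=t^\bot$, and — using that $\barre{u}$ contains no redex for $u$ over the original alphabet, since $\barre{o}=o\rightarrow o$ leaves every head partially applied — that the only redexes of $\Phi(t)$ are the $\barre{F}\ \cdots\ \Delta$ sitting exactly where $t$ has its $OI$-redexes. The simulation is then: $I\rightarrow_{\barre{G}}\barre{S}\ \Delta=\Phi(S)$, and if $t\rightarrow_{OI}t'$ by contracting an $OI$-redex $F\ u_1\cdots u_k$ with rule $F\ x_1\cdots x_k\rightarrow e$, then $\Phi(t)$ contracts the matching $\barre{F}\ \barre{u_1}\cdots\barre{u_k}\ \Delta$ to $\barre{e[x_i\mapsto u_i]}\ \Delta$ and, pushing $\barre{a}$-rules, reaches exactly $\Phi(t')$; conversely, every term reachable in $\barre{G}$ from $I$ arises from some $\Phi(t)$ with $S\rightarrow^*_{OI}t$ by contracting $\barre{F}$-redexes and pushing only partway, so (by monotonicity of $(\cdot)^\bot$ under $\rightarrow_{\barre{G}}$) its $\bot$-image is $\sqsubseteq (t')^\bot$ for some $t'$ with $S\rightarrow^*_{OI}t'$. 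Hence $\{\,t^\bot\mid S\rightarrow^*_{OI}t\,\}$ and $\{\,v^\bot\mid I\rightarrow^*_{\barre{G}}v\,\}$ are mutually cofinal, their suprema coincide, and $\valtree{\barre{G}}=\valtree{G}_{OI}=\valtree{G}$; combined with the lemma this gives $\valtree{G}=\valtree{\barre{G}}_{IO}$.

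The step I expect to be the main obstacle is the clean single-step matching: one $OI$ step of $G$ corresponds not to one step of $\barre{G}$ but to one $\barre{F}$-contraction followed by a burst of $\barre{a}$-contractions, and a $\barre{G}$-derivation from $I$ may fire $\barre{F}$'s at several positions (each spawning further sub-positions) and push only partway at each, so the intermediate terms do not line up with the $\Phi(t)$'s. The remedy is to abandon any step-by-step bijection and argue only about $(\cdot)^\bot$: because the value tree is the supremum of the $\bot$-images, it suffices that each side be cofinal in the other, and monotonicity of $(\cdot)^\bot$ together with the fact that $\Phi(t)$ exposes $\barre{G}$-redexes precisely at the $OI$-redexes of $t$ makes both cofinality claims routine once the combinatorial characterization of $OI$-redexes is in hand.
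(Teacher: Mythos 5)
Your proposal is correct and follows essentially the same route as the paper: invoke Lemma~\ref{barreGIOOI} to get $\valtree{\barre G}_{IO}=\valtree{\barre G}$, then show $\valtree{\barre G}=\valtree{G}$ by matching derivations of $\barre G$ with OI derivations of $G$ (the paper compresses this second step into the one-line remark that OI derivations in $\barre G$ act like OI derivations in $G$). Your map $\Phi$, the observation that $\barre u$ contains no redex so the redexes of $\Phi(t)$ sit exactly at the OI-redexes of $t$, and the cofinality-of-$\bot$-images argument are a sound and faithful elaboration of that remark.
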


\section{From IO to OI}\label{IO2OI}
	The goal of this section is to transform the scheme $G$ into a scheme $G''$ such that $\valtree{G''}=\valtree{G}_{IO}$. The main difference between IO and OI derivations is that some redex would lead to $\bot$ in IO derivation while OI derivations could be more productive. For example take $F:o \rightarrow o$ such that $F \ x \rightarrow c$, and $H:o$ such that $H \rightarrow a\ H$, with $a:o \rightarrow o$ and $c:o$ being terminal symbols. The term $F \ H$ has a unique $OI$ derivation, $F \ H \rightarrow_{OI} c$, it is finite and it leads to the value tree assiocated. On the other hand, the (unique) IO derivation is the following $F\ H \rightarrow F (a \ H) \rightarrow F \ (a \ (a \ H)) \rightarrow ...$ which leads to the tree $\bot$.  

The idea of the transformation is to compute a tool (based on a type system) that decides if a redex would produce $\bot$ with $IO$ derivations (Section~\ref{koba}); then we embed it into $G$ and force any such redex to produce $\bot$ even with unrestricted derivations (Section~\ref{selfcorrecting}).
	\subsection{The Type System}\label{koba}
Given a term $t:\tau\in \termes(\termi\uplus\nontermi)$, we define the two following properties on $t$: $\mathcal{P}_\bot(t)=$``The term $t$ has type $o$ and its associated IO valuation tree is $\bot$'', and $\mathcal{P}_\infty(t)=$``the term $t$ has not necessarily ground type, it contains a redex $r$ such that any IO derivation from $r$ producing it's IO valuation tree is infinite''. Note that $\mathcal{P}_\infty(t)$ is equivalent to ``the term $t$ contains a redex $r$ such that $\valtree{G_r}_{IO}$ is either infinite or contains $\bot$''. In this section we describe a type system, inspired from the work of Kobayashi~\cite{Kobayashi09}, that characterises if a term verifies these properties.\medskip

Let $Q$ be the set $\{q_{\bot},q_{\infty}\}$. Given a type $\tau$, we define inductively the sets $(\tau)^{atom}$ and $(\tau)^{\wedge}$ called respectively set of atomic mappings and set of conjunctive mappings:

\noindent$(o)^{atom}=Q \ , \quad (o)^{\wedge}=\{\bigwedge\{\theta_1,...,\theta_i\} \ | \ \theta_1,...,\theta_i \in Q\}\ , \quad  
(\tau_1 \rightarrow \tau_2)^{atom}= \{q_\infty\} \uplus \{(\tau_1)^\wedge \rightarrow (\tau_2)^{atom}\}$
$(\tau_1 \rightarrow \tau_2)^{\wedge}=\{\bigwedge\{\theta_1,...,\theta_i\} \ | \ \theta_1,...,\theta_i \in (\tau_1 \rightarrow \tau_2)^{atom}\}$.

We will usually use the letter $\theta$ to represents atomic mappings, and the letter $\sigma$ to represent conjunctive mappings. Given a conjunctive mapping $\sigma$ (resp. an atomic mapping $\theta$) and a type $\tau$, we write $\sigma :: \tau$ (resp. $\theta::_a \tau$) the relation $\sigma \in (\tau)^\wedge$ (resp. $\theta \in (\tau)^{atom})$. For the sake of simplicity, we identify the atomic mapping $\theta$ with the conjunctive mapping $\bigwedge\{ \theta \}$.

%Let $Q$ be the set $\{q_{\bot},q_{\infty}\}$, $\termi$ be a set of terminals, $\nontermi$ a set of nonterminals, and $\variables$ be a set of variables. Given a type $\tau$, we define  the relations $\sigma ::_a \tau$ and $\sigma :: \tau$ as the smallest relations verifying the following properties:  $q_{\bot}::_a o $, $q_\infty::_a \tau$ for all $\tau$, $\sigma \rightarrow \theta ::_a \tau_1 \rightarrow \tau_2$ for $\sigma:: \tau_1$ and $\theta ::_a \tau_2$, and $\bigwedge \{\theta_1,...,\theta_n\}:: \tau$ for all $\theta_i ::_a \tau$. We call such $\sigma$ mappings.

Given a term $t$ and a conjunctive mapping $\sigma$, we define a judgment as a tuple $\Theta \vdash t \triangleright\sigma$, pronounce ``from the environment $\Theta$, one can prove that $t$ matches the conjunctive mapping $\sigma$'', where the environment $\Theta$ is a partial mapping from $ \variables \uplus \nontermi$ to conjunctive mapping. Given an environment $\Theta$,  $\alpha\in \variables \uplus \nontermi$ and a conjunctive mapping $\sigma$, we define the environment $\Theta'=\Theta,\alpha\triangleright \sigma$ as $Dom(\Theta')=Dom(\Theta)\cup \{\alpha\}$ and $\Theta'(\alpha)= \sigma$ if $\alpha\not\in Dom(\Theta)$, $\Theta'(\alpha)= \sigma \wedge \Theta(\alpha)$ otherwise, and $\Theta'(\beta)=\Theta(\beta)$ if $\beta \neq \alpha$.

 We define the following judgement rules:

\[ \frac{\Theta \vdash t\triangleright \theta_1 \quad ... \quad \Theta \vdash t\triangleright \theta_n}
{\Theta \vdash t\triangleright \bigwedge\{ \theta_1,...,\theta_n\}} (Set)\qquad \frac{}{\Theta ,\alpha\triangleright\bigwedge\{ \theta_1,...,\theta_n\}\vdash \alpha \triangleright \theta_i} (At) \ \textit{(for all $i$)}\]

\[\frac{}{\Theta \vdash a \triangleright \sigma_1\rightarrow ... \rightarrow \sigma_{i\leq arity(a)} \rightarrow q_\infty }(\termi)\ (\textit{ for $a \in \termi$ and  $\exists j \ \sigma_j =q_\infty$ }) \]

\[\frac {\Theta \vdash t_1 \triangleright \sigma \rightarrow \theta \quad \Theta \vdash t_2 \triangleright \sigma}
{\Theta \vdash t_1 \ t_2 \triangleright \theta} (App)\quad \frac {}
{\Theta \vdash t \triangleright q_\infty  \rightarrow q_\infty} (q_\infty \rightarrow q_\infty \ I) \ (\textit{if $t:\tau_1 \rightarrow \tau_2$})\quad \frac {\Theta \vdash t_1 \triangleright q_\infty }
{\Theta \vdash t_1 \ t_2 \triangleright q_\infty}  (q_\infty \ I)\]

Remark that there is no rules that directly involves $q_\bot$, but it does not mean that no term matches $q_\bot$, since it can appear in $\Theta$. Rules like $(At)$ or $(App)$ may be used to state that a term matches $q_\bot$.

We say that $(G,t)$ matches the conjunctive mapping $\sigma$ written $\vdash (G,t) \triangleright \sigma$ if there exists an environment $\Theta$, called a witness environment of $\vdash (G,t) \triangleright \sigma$,  such that (1) $Dom(\Theta)=\nontermi$, (2) $\forall F:\tau \in \nontermi \ \Theta(F)::\tau$, (3) if $F \ x_1 ... x_k \rightarrow e$$\in \rewrite$ and $\Theta \vdash F\triangleright \sigma_1 \rightarrow ... \rightarrow \sigma_{i\leq k} \rightarrow q$ then either there exists $j$ such that $ q_\infty \in \sigma_j$, or $i=k$ and $\Theta,x_1\triangleright \sigma_1 , ... , x_k \triangleright \sigma_k \vdash e\triangleright q$, (4) $\Theta \vdash t\triangleright\sigma$.

The following two results state that this type system matches the properties $\mathcal{P}_\bot$ and $\mathcal{P}_\infty$ and furthermore we can construct a universal environment, $\Theta^\star$, that can correctly judge any term.

\begin{theorem}[Soundness and Completeness]
Let $G$ be an HORS, and $t$ be term (of any type), $\vdash(G,t) \triangleright q_\infty$ (resp. $\vdash(G,t) \triangleright q_\bot$) if and only if $\mathcal{P}_\infty(t)$ (resp. $\mathcal{P}_\bot(t)$) holds.\label{Soundness and Completeness}
\end{theorem}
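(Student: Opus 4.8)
The plan is to prove soundness and completeness separately, and in each direction to proceed by an induction that follows the structure of IO derivations. The key technical device is a precise semantic reading of the atomic mappings: an atomic mapping $\theta ::_a \tau$ should be thought of as a set of terms of type $\tau$, defined by induction on $\tau$, so that $q_\bot$ at ground type denotes the terms whose IO value tree is $\bot$, $q_\infty$ at ground type denotes the terms that contain an IO-nonterminating redex, $q_\infty$ at arrow type denotes terms that already contain such a redex (irrespective of arguments), and $\sigma \rightarrow \theta$ denotes the terms $t$ such that for every argument $s$ lying in the denotation of $\sigma$ (i.e.\ in the denotation of every conjunct), $t\,s$ lies in the denotation of $\theta$. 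A conjunctive mapping denotes the intersection of the denotations of its conjuncts. The first step is to check that the judgement rules are sound for this semantics \emph{relative to a fixed environment} $\Theta$: if $\Theta \vdash t \triangleright \sigma$ and every free symbol $\alpha$ of $t$ is assigned, by the ``real world'', a value in the denotation of $\Theta(\alpha)$, then $t$ lands in the denotation of $\sigma$. This is a routine rule-by-rule induction; the rules $(\termi)$, $(q_\infty \rightarrow q_\infty\,I)$ and $(q_\infty\,I)$ are exactly the closure properties one needs, and $(App)$, $(Set)$, $(At)$ are immediate from the definitions.

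For soundness of $\vdash(G,t)$, the point is to discharge the dependence on the environment using condition (3) in the definition of a witness environment. I would show, by fixpoint / approximant reasoning on the value tree, that if $\Theta$ satisfies (1)--(3) then each non-terminal $F$ genuinely lies in the denotation of $\Theta(F)$. Concretely, one reasons over the finite approximations $t^\bot$ for $S \rightarrow_{IO}^* t$: condition (3) says that whenever $\Theta \vdash F \triangleright \sigma_1 \rightarrow \cdots \rightarrow \sigma_k \rightarrow q$, unfolding the rule for $F$ once either already exposes a $q_\infty$ argument slot (so the $(q_\infty\,I)$-style behaviour is triggered and the obligation is met outright) or reduces the claim to $\Theta, x_1\triangleright\sigma_1,\dots,x_k\triangleright\sigma_k \vdash e \triangleright q$, which by the relative soundness step above transfers the property from the arguments to $e$. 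An induction on the length of the IO derivation (for $q_\infty$) and on the depth at which $\bot$ is forced in the IO value tree (for $q_\bot$) then yields $\mathcal{P}_\infty(t)$, resp.\ $\mathcal{P}_\bot(t)$, from $\vdash(G,t)\triangleright q_\infty$, resp.\ $q_\bot$.

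For completeness, I would build the witness environment explicitly. Define $\Theta^\star(F)$ to be the conjunction of \emph{all} atomic mappings $\theta::_a\tau_F$ such that $F$ semantically lies in the denotation of $\theta$ (using the semantics above, with the genuine IO behaviour of terms of $G$). One must check this is a well-defined finite object: for each type $\tau$ the set $(\tau)^{atom}$ is finite, so the conjunction is finite. Conditions (1) and (2) are immediate. Condition (3) is where the real content lies: one must show that this maximal-consistent environment is closed under the rewrite rules, i.e.\ that if $F$ satisfies $\sigma_1\rightarrow\cdots\rightarrow\sigma_k\rightarrow q$ then so does the right-hand side $e$ under the assumptions $x_j \triangleright \sigma_j$ — and, conversely, that every true semantic fact about $e$ is \emph{derivable} in the type system, so that $\Theta^\star \vdash e \triangleright q$ actually holds. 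This in turn requires a completeness lemma for the judgement rules: if $t$ (semantically, in an environment matching $\Theta$) lies in the denotation of $\sigma$, then $\Theta \vdash t \triangleright \sigma$. That lemma is proved by induction on $t$; the delicate clause is the application case $t = t_1\,t_2$, where from $t \in \sem{\theta}$ one must \emph{synthesise} a conjunctive mapping $\sigma$ for the argument $t_2$ with $t_1 \in \sem{\sigma\rightarrow\theta}$ and $t_2\in\sem{\sigma}$ — here one takes $\sigma$ to be the conjunction of all atomic mappings matched by $t_2$, and uses that there are only finitely many, together with the $q_\infty$ rules to handle the case where $t_2$ itself already contains a nonterminating redex.

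I expect the main obstacle to be exactly this completeness-of-the-rules lemma, and in particular making precise the denotational semantics of atomic mappings so that it is simultaneously (a) a genuine fixpoint compatible with the $\bot$-transformation and limits, (b) strong enough that the synthesised argument mapping in the $(App)$ case is finite, and (c) aligned with the informal readings of $\mathcal{P}_\bot$ and $\mathcal{P}_\infty$ — especially the subtlety that $q_\infty$ at ground type must mean ``contains a redex whose own IO value tree is infinite \emph{or} contains $\bot$'', not merely ``has an infinite IO derivation''. Once the semantics is pinned down, soundness is a straightforward fixpoint induction and completeness follows by assembling $\Theta^\star$ as the maximal consistent environment; the bookkeeping for the $q_\infty$-propagation rules is routine but must be done carefully to avoid circularity between ``$F$ matches $\theta$'' and ``the rule for $F$ is consistent''.
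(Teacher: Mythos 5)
A caveat first: the paper states this theorem without any proof (only Proposition~\ref{Universal Witness Short} receives a sketch), so there is nothing to compare you against directly; I can only assess your plan on its merits. Its overall architecture is the expected one and fits the paper's framework: your ``maximal consistent environment'' is exactly the $\Theta^\star$ of Proposition~\ref{Universal Witness Short}, which the paper obtains instead as the greatest fixpoint of the operator $\mathcal{F}$ -- identifying your semantic construction with that fixpoint iteration would in fact be the cleanest way to avoid the circularity you worry about at the end.

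The genuine gap is in the soundness direction, where your stated induction measures do not make sense as given. Both $\mathcal{P}_\bot$ and $\mathcal{P}_\infty$ are \emph{non-productivity} properties, i.e.\ safety/greatest-fixpoint properties of IO rewriting: $\valtree{G_t}_{IO}=\bot$ says that no IO-reachable term from $t$ ever has a terminal head, and $\mathcal{P}_\infty$ quantifies over \emph{all} IO derivations from some redex. There is therefore no ``length of the IO derivation'' nor ``depth at which $\bot$ is forced'' available to induct on -- when the conclusion holds, the value tree is literally $\bot$ and the relevant derivations are infinite. What works is either a genuine co-induction (the set of terms typable with $q_\bot$, resp.\ $q_\infty$, is an invariant of IO rewriting) or the contrapositive: prove subject reduction for IO steps under a witness environment (if $\Theta\vdash t\triangleright q_\bot$ and $t\rightarrow_{IO}t'$ then $\Theta\vdash t'\triangleright q_\bot$, using condition (3); the clause $\exists j,\ q_\infty\in\sigma_j$ is handled by the observation that the arguments of an \emph{innermost} redex are redex-free and, under an environment satisfying (3), cannot be judged $q_\infty$), plus the fact that a terminal-headed ground term cannot be judged $q_\bot$ since the only rule for terminals concludes $q_\infty$; then one inducts on the length of a finite IO derivation witnessing productivity. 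Your sketch never states these lemmas, and the measures you do state would have to be replaced by them. Beyond that, you correctly identify that the crux of completeness is verifying condition (3) for the semantically defined $\Theta^\star$ (equivalently, completeness of the judgement rules in the application case, with the $q_\infty$ clauses handling unproductive arguments), but you only flag it; as it stands the proposal is a plausible plan with its two hardest steps -- the soundness argument in correct (co)inductive form and the completeness-of-rules lemma -- still open.
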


\begin{proposition}[Universal Witness]
There exists an environment $\Theta^\star$ such that for all term $t$, the judgment $\vdash (G,t)\triangleright \sigma $ holds if and only if  $\Theta^\star \vdash t \triangleright \sigma$.\label{Universal Witness Short}
\end{proposition}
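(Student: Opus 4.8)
The plan is to observe that clauses (1)--(3) in the definition of $\vdash(G,t)\triangleright\sigma$ do not mention the term $t$ at all: they only say that the environment $\Theta$ has domain $\nontermi$, respects types, and is closed under the rewrite rules in the sense of clause (3). Call such a $\Theta$ \emph{consistent}. Then, by definition, $\vdash(G,t)\triangleright\sigma$ holds iff some consistent $\Theta$ satisfies $\Theta\vdash t\triangleright\sigma$, so it suffices to build a single consistent environment $\Theta^\star$ that is \emph{largest}, meaning that $\Theta\vdash t\triangleright\sigma$ implies $\Theta^\star\vdash t\triangleright\sigma$ whenever $\Theta$ is consistent (for every $t$ and $\sigma$). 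Given such a $\Theta^\star$ the proposition follows at once: if $\Theta^\star\vdash t\triangleright\sigma$ then $\Theta^\star$ is itself a witness of $\vdash(G,t)\triangleright\sigma$; and if $\vdash(G,t)\triangleright\sigma$ holds with witness $\Theta$ then $\Theta\vdash t\triangleright\sigma$, hence $\Theta^\star\vdash t\triangleright\sigma$ by largeness.

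The technical heart is a \emph{weakening lemma}. Order environments by $\Theta\sqsubseteq\Theta'$ iff $Dom(\Theta)\subseteq Dom(\Theta')$ and every atomic conjunct of $\Theta(\alpha)$ is also a conjunct of $\Theta'(\alpha)$. Then $\Theta\sqsubseteq\Theta'$ together with $\Theta\vdash t\triangleright\sigma$ yields $\Theta'\vdash t\triangleright\sigma$, by induction on the derivation: the only rule that inspects the environment is $(At)$, and it merely needs a given atomic mapping to occur among the conjuncts, which $\sqsubseteq$ preserves; the rules $(\termi)$, $(q_\infty\rightarrow q_\infty\ I)$, $(q_\infty\ I)$ ignore $\Theta$; and $(Set)$, $(App)$ follow from the induction hypothesis. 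One also checks that $\sqsubseteq$ survives the extension of both sides by identical bindings $x_1\triangleright\sigma_1,\dots,x_k\triangleright\sigma_k$ on fresh variables, as performed in clause (3).

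Now, since $Q$ is finite and, by a straightforward induction on $\tau$, the sets $(\tau)^{atom}$ and $(\tau)^{\wedge}$ are finite, there are only finitely many type-respecting environments of domain $\nontermi$, and they form a finite lattice under $\sqsubseteq$ whose join is pointwise conjunction $\wedge$. The consistent environments are closed under $\wedge$: given consistent $\Theta_1,\Theta_2$ and a judgement $(\Theta_1\wedge\Theta_2)\vdash F\triangleright\sigma_1\rightarrow\cdots\rightarrow\sigma_i\rightarrow q$, either it is produced by $(At)$ from a conjunct of (say) $\Theta_1(F)$, in which case clause (3) for $\Theta_1$ supplies the required conclusion and the weakening lemma lifts it to $\Theta_1\wedge\Theta_2$, or it is produced by a structural rule, which forces its shape to be $q_\infty\rightarrow q_\infty$, for which clause (3) holds trivially via its first disjunct. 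This family is also non-empty (for instance the environment sending each $F:\tau_1\rightarrow\cdots\rightarrow\tau_k\rightarrow o$ to $q_\infty\rightarrow\cdots\rightarrow q_\infty$ is consistent). A finite non-empty family closed under $\wedge$ has a greatest element $\Theta^\star$, namely the $\wedge$ of all its members; it is consistent and $\Theta\sqsubseteq\Theta^\star$ for every consistent $\Theta$, so by the weakening lemma it is largest in the sense above. With the first paragraph this establishes the equivalence.

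The step I expect to demand the most care is this closure under $\wedge$, i.e.\ the clause-(3) bookkeeping: one must pin down that the atomic judgements derivable about a bare non-terminal $F$ are exactly the conjuncts stored in $\Theta(F)$ together with the automatic judgement $q_\infty\rightarrow q_\infty$, and check that the latter never violates clause (3). Equivalently one can recast the argument as a greatest fixpoint computation: let $\Phi$ map $\Theta$ to the environment recording, for each $F$, precisely the atomic mappings that clause (3) allows in $\Theta(F)$ given $F$'s rule and the bindings derived from $\Theta$; then consistency of $\Theta$ means $\Theta\sqsubseteq\Phi(\Theta)$, $\Phi$ is monotone by the weakening lemma, and $\Theta^\star=\nu\Phi$ exists by Knaster--Tarski on the finite lattice. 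All remaining ingredients --- weakening, finiteness, and the two concluding implications --- are routine.
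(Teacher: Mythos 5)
Your proposal is correct and follows essentially the paper's route: both arguments construct the greatest environment satisfying clauses (1)--(3) and conclude via monotonicity of the judgement in the environment (your weakening lemma, left implicit in the paper); indeed your closing recasting as $\nu\Phi$ on the finite lattice is exactly the paper's construction, which iterates the operator $\mathcal{F}$ downwards from the full environment $\Theta_0$ until a fixpoint is reached and then argues that this fixpoint is the maximum of all environments satisfying (1)--(3). One detail to repair: your non-emptiness witness is wrong in general. The environment sending every $F:\tau_1\rightarrow\cdots\rightarrow\tau_k\rightarrow o$ to $q_\infty\rightarrow\cdots\rightarrow q_\infty$ fails clause (3) for ground-type non-terminals (which always exist, e.g.\ $S:o$): for a rule such as $S\rightarrow c$ with $c$ a terminal constant, the judgement $\Theta\vdash S\triangleright q_\infty$ is derivable by $(At)$, and clause (3) with $i=k=0$ then demands $\Theta\vdash c\triangleright q_\infty$, which no rule can produce (the $(\termi)$ rule needs some argument mapping equal to $q_\infty$). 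The fix is immediate: take instead the environment assigning every non-terminal the empty conjunction $\bigwedge\{\}$; then the only derivable atomic judgements about a bare non-terminal are of the form $q_\infty\rightarrow q_\infty$, whose first argument contains $q_\infty$, so clause (3) holds and the family of consistent environments is non-empty, after which your closure-under-$\wedge$ and weakening arguments go through unchanged.
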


\begin{proof}[Proof (Sketch)]
To compute $\Theta^\star$, we start with an environment $\Theta_0$ satisfying Properties (1) and (2)  (~$Dom(\Theta_0)=\nontermi$ and $\forall F:\tau \in \nontermi \ \Theta_0(F)::\tau$~) that is able to judge any term $t:\tau$ with any conjunctive mapping $\sigma::\tau$.

 Then let $\mathcal{F}$ be the mapping from the set of environments to itself, such that for all $F:\tau_1 \rightarrow ... \rightarrow \tau_k \rightarrow o \in \nontermi$, if $F \ x_1...x_k\rightarrow e$$\in\rewrite$ then, 
\begin{multline*}\mathcal{F}(\Theta)(F) = \{\sigma_1\rightarrow ... \rightarrow \sigma_k \rightarrow q \ | \ q\in Q \wedge 
\forall i \ \sigma_i :: \tau_i \wedge 
\Theta,x_1\triangleright \sigma_1,...,x_k\triangleright \sigma_k \vdash e : q\}  \\
\cup\{\sigma_1 \rightarrow ... \rightarrow \sigma_{i\leq k} \rightarrow q_\infty \ | \  \wedge \forall i \ \sigma_i :: \tau_i \wedge \exists j \ q_\infty \in \sigma_j \}\\
\cup\{\sigma_1 \rightarrow ... \rightarrow \sigma_{ k} \rightarrow q_\bot \ | \ \forall i \ \sigma_i :: \tau_i \wedge \exists j \ q_\infty \in \sigma_j \}.\end{multline*}

We iterate $\mathcal{F}$ until we reach a fixpoint. The environment we get is $\Theta^\star$, it verifies properties (1) (2) and (3). Furthermore we can show that this is the maximum of all environment satisfying these properties, i.e. if $\vdash (G,t)\triangleright \sigma $ then   $\Theta^\star \vdash t \triangleright \sigma$.
\end{proof}

	\subsection{Self-Correcting Scheme}\label{selfcorrecting}

For all term $t:\tau \in \termes(\termi\uplus\nontermi)$,  we define $\sem{t} \in (\tau)^\wedge $, called the semantics of $t$, as the conjunction of all atomic mappings $\theta$ such that $\Theta^\star \vdash t \triangleright \theta$ (recall that $\Theta^\star$ is the environment of Proposition~\ref{Universal Witness Short}). In particular $\mathcal{P}_\bot(t)$ (resp. $\mathcal{P}_\infty(t)$) holds if and only if $q_\bot \in \sem{t}$ (resp. $q_\infty \in \sem{t}$). Given two terms $t_1 : \tau_2 \rightarrow \tau$ and $t_2 : \tau_2$ the only rules we can apply to judge $\Theta^\star  \vdash t_1 \ t_2 \triangleright \theta $ are $(App)$, $(q_\infty \rightarrow q_\infty \ I)$ and $(q_\infty \ I)$. We see that $\theta$ only depends on which atomic mappings are matched by $t_1$ and $t_2$. In other words $\sem{t_1 \ t_2}$ only depends on $\sem{t_1}$ and $\sem{t_2}$, we write $\sem{t_1} \ \sem{t_2} = \sem{t_1 \ t_2}$.\medskip

In this section, given a scheme $G=\langle \variables, \termi,\nontermi,\rewrite, S \rangle$, we transform it into $G'=\langle \variables', \termi,\nontermi',\rewrite', S \rangle$ which is basically the same scheme except that while it is producing an IO derivation, it evaluates $\sem{t'}$ for any subterm $t'$ of the current term and label $t'$ with $\sem{t'}$. Note that  if $t \rightarrow_{IO} t'$, then $\sem{t} = \sem{t'}$. Since we cannot syntactically label terms, we will label all symbols by the semantics of their arguments, \emph{e.g.} if we want to label $F \ t_1 ... t_k$, we will label $F$ with the $k$-tuple $(\sem{t_1},...,\sem{t_k})$. 

A problem may appear if some of the arguments are not fully applied, for example imagine we want to label $F \ H$ with $H:o \rightarrow o$. We will label $F$ with $\sem{H}$, but since $H$ has no argument we do not know how to label it. The problem is that we cannot wait to label it because once a non-terminal is created, the derivation does not deal explicitly with it. The solution is to create one copy of $H$ per possible semantics for its argument (here there are four of them: $\bigwedge\{ \}, \bigwedge\{ q_\bot \},\bigwedge\{ q_\infty\}, \bigwedge \{ q_\bot, q_\infty\}$). This means that $F^{\sem{H}}$ would not have the same type as $F$: $F$ has type $(o\rightarrow o) \rightarrow o$, but $F^{\sem{G}}$ will have type $(o\rightarrow o)^4 \rightarrow o$. Hence, $F \ H$ will be labelled the following way: $ F^{\sem{H}} \ H^{\bigwedge\{ \}}H^{\bigwedge\{ q_\bot \}}H^{\bigwedge\{ q_\infty\}}H^{\bigwedge\{ q_\bot,q_\infty \}} $. Note that even if $F$ has $4$ arguments, it only has to be labelled with one semantics since all four arguments represent different labelling of the same term. We now formalize these notions.\medskip

Let us generalize the notion of semantics to deals with terms containing some variables. Given an environment on the variables $\Theta^\variables$ such that $Dom(\Theta^\variables)\subseteq \variables$  and if $x:\tau$ then $\Theta^\variables(x) :: \tau$, and given a term $t:\tau \in \termes(\termi\uplus\nontermi \uplus Dom(\Theta^\variables) )$, we define $\sem{t}_{\Theta^\variables} \in (\tau)^\wedge $,  as the conjunction of all atomic mappings $\theta$ such that $\Theta^\star,\Theta^\variables \vdash t \triangleright \theta$. Given two terms $t_1 : \tau_2 \rightarrow \tau$ and $t_2 : \tau_2$ we still have that $\sem{t_1 \ t_2}_{\Theta^\variables}$ only depends on $\sem{t_1}_{\Theta^\variables}$ and $\sem{t_2}_{\Theta^\variables}$.\medskip

To  a type $\tau = \tau_1 \rightarrow ... \rightarrow \tau_k \rightarrow o$ we associate the integer 
$\nbvar{\tau} = Card ( \{ (\sigma_1,...,\sigma_k) \ | \ \forall i \ \sigma_i \in (\tau_i)^{\wedge}  \}  )$
 and a complete ordering of  $\{ (\sigma_1,...,\sigma_k) \ | \ \forall i \ \sigma_i \in (\tau_i)^{\wedge}  \}$ denoted $\vec \sigma_1^\tau$, $\vec \sigma_2^\tau$, ... , $\vec \sigma_{\nbvar{\tau}}^\tau$. We define inductively the type $\tau^+ = (\tau_1^+)^{\nbvar{\tau_1}} \rightarrow ... \rightarrow (\tau_k^+)^{\nbvar{\tau_k}} \rightarrow o$. 
 
To a non terminal $F: \tau_1 \rightarrow ... \rightarrow \tau_k \rightarrow o$ (resp. a variable $x:  \tau_1 \rightarrow ... \rightarrow \tau_k \rightarrow o$) and a tuple $\sigma_1::\tau_1,...,\sigma_k::\tau_k$, we associate the non-terminal $F^{\sigma_1,...,\sigma_k} : \tau_1^{\nbvar{\tau_1}}\rightarrow ... \rightarrow \tau_k^{\nbvar{\tau_k}} \rightarrow o \in \nontermi'$ (resp. a variable $x^{\sigma_1,...,\sigma_k} : \tau_1^{\nbvar{\tau_1}}\rightarrow ... \rightarrow \tau_k^{\nbvar{\tau_k}} \rightarrow o\in \variables'$). 

Given a term $t:\tau = \tau_1 \rightarrow ... \rightarrow \tau_k \rightarrow o \in \termes(\variables \uplus \termi \uplus \nontermi)$ and an environment on the variables $\Theta^\variables$ such that $Dom(\Theta^\variables)\subseteq \variables$ contains all variables in $t$,  we define inductively the term $t_{\Theta^\variables}^{+\sigma_1,...,\sigma_k}:\tau^+ \in \termes(\variables' \uplus \termi' \uplus \nontermi')$ for all $\sigma_1::\tau_1,...,\sigma_k::\tau_k$. If $t=F\in \nontermi$ (resp. $t=x\in \variables$), $t_{\Theta^\variables}^{ +  \sigma_1,...,\sigma_k}=F^{\sigma_1,...,\sigma_k}$ (resp. $t_{\Theta^\variables}^{+\sigma_1,...,\sigma_k} = x^{\sigma_1,...,\sigma_k}$), if $t = a \in \termi$, $t_{\Theta^\variables}^{+ \sigma_1,...,\sigma_k}=a$. Finally consider the case where $t = t_1 \ t_2$ with $t_1 : \tau ' \rightarrow \tau$ and $t_2 : \tau '$. Let $\sigma=\sem{t_2}_{\Theta^\variables}$. Remark that ${t_1}_{\Theta^\variables}^{+ \sigma,\sigma_1,...,\sigma_k}: (\tau'^+)^{\nbvar{\tau'}} \rightarrow \tau^+$. We define $(t_1 \ t_2)_{\Theta^\variables}^{+ \sigma_1,...,\sigma_k} = {t_1}_{\Theta^\variables}^{+ \sigma,\sigma_1,...,\sigma_k} \ {t_2}_{\Theta^\variables}^{+ \vec \sigma^{\tau'}_1} ... \ {t_2}_{\Theta^\variables}^{+ \vec \sigma^{\tau'}_{\nbvar{\tau'}}}$. Note that since this transformation is only duplicating and anotating, given a term $t^{+\sigma_1,...,\sigma_k}$ we can uniquely find the unique term $t$ associated to it.

Let $F:\tau_1 \rightarrow ... \rightarrow \tau_k \rightarrow o \in \nontermi$, $\sigma_1::\tau_1,...,\sigma_k::\tau_k$, and $\Theta^\variables =x_1 \triangleright \sigma_1,...,x_k \triangleright \sigma_k$ . If $F \ x_1 ... x_k \ \rightarrow \ e$$ \in \rewrite$, we define in $\rewrite'$ the rule $\ F^{\sigma_1,...,\sigma_k} \ x_1^{+\vec \sigma^{\tau_1}_1} ... \ x_1^{+ \vec \sigma^{\tau_1}_{\nbvar{\tau_1}}} \ ... \ \ x_k^{+ \vec \sigma^{\tau_k}_1} ... \ x_k^{+\vec \sigma^{\tau_k}_{\nbvar{\tau_k}}} \ \rightarrow \ e^+_{\Theta^\variables}$. Finally, recall that $G'=\langle \variables', \termi,\nontermi',\rewrite', S \rangle$.\medskip

The following theorem states that $G'$ is just a labeling version of $G$ and that it acts the same.

\begin{theorem}[Equivalence between $G$ and $G'$]
Given a term $t:o$, $\valtree{G'_{t^+}}_{IO}=\valtree{G_t}_{IO}$.\label{Equivalence between $G$ and $G'$}
\end{theorem}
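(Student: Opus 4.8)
The plan is to show that $G'$ simulates $G$ step-by-step along IO derivations, with the annotation on each symbol being exactly the semantics (in the sense of $\sem{\cdot}$) of the argument it stands in front of. Concretely, I would set up a correspondence between terms reachable in IO derivations of $G_t$ and terms reachable in IO derivations of $G'_{t^+}$: to a term $s : o$ of $G$ I associate $s^{+}$ (the annotated term defined in the construction, with the empty variable environment), and conversely, since the $(\cdot)^{+}$ transformation is injective and the annotations are uniquely reconstructible, every term reachable in $G'_{t^+}$ is of the form $s^{+}$ for a unique $s$. The first step is to verify the \emph{annotation invariant}: in any term $s^{+}$ arising from the construction, a symbol $\alpha^{\sigma_1,\dots,\sigma_k}$ occurs at the head of a subterm $\alpha^{\sigma_1,\dots,\sigma_k}\,u_1\cdots u_k$ only when each $u_i$ decodes to a term $v_i$ with $\sem{v_i} = \sigma_i$ (relative to the ambient variable environment). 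This is immediate from the inductive definition of $t^{+\sigma_1,\dots,\sigma_k}_{\Theta^\variables}$ together with the compositionality $\sem{t_1\ t_2} = \sem{t_1}\,\sem{t_2}$ and the fact that $\sem{\cdot}$ is invariant under $\rightarrow_{IO}$ (noted just before the theorem statement).

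The second step is the \emph{simulation lemma}: $s \rightarrow_{IO} s'$ in $G_t$ if and only if $s^{+} \rightarrow_{IO} s'^{+}$ in $G'_{t^+}$. For the forward direction, a redex $F\,s_1\cdots s_k$ inside $s$ (with no redex in any $s_i$, by IO-ness) corresponds under $(\cdot)^{+}$ to a subterm $F^{\sem{s_1},\dots,\sem{s_k}}\,(s_1^{+\vec\sigma^{\tau_1}_1}\cdots)\cdots(s_k^{+\vec\sigma^{\tau_k}_{\nbvar{\tau_k}}}\cdots)$, which is precisely the left-hand side pattern of the rule we put in $\rewrite'$ for $F$ with that annotation; firing it produces $e^{+}_{\Theta^\variables}$ with $\Theta^\variables = x_1\triangleright\sem{s_1},\dots,x_k\triangleright\sem{s_k}$, and one checks by induction on $e$ — using again compositionality of $\sem{\cdot}$ and the fact that $\sem{x_i}_{\Theta^\variables} = \sem{s_i}$ — that $(e_{[x_j\mapsto s_j]})^{+} = e^{+}_{\Theta^\variables}$ with the variables substituted by the corresponding $s_j^{+\vec\sigma}$; so the result is exactly $s'^{+}$. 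One must also check that the IO side-condition is preserved: the copies $s_i^{+\vec\sigma}$ contain no redex because $s_i$ contains none and the $(\cdot)^{+}$ transformation only duplicates and annotates, creating no new redexes. The converse direction uses injectivity of $(\cdot)^{+}$ and the observation that every rule of $\rewrite'$ is the $(\cdot)^{+}$-image of a rule of $\rewrite$, so any $\rightarrow_{IO}$ step of $G'_{t^+}$ decodes to a $\rightarrow_{IO}$ step of $G_t$.

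The third and final step is to pass from derivations to value trees. Since $(\cdot)^{+}$ leaves terminal symbols $a$ untouched (the rule $t = a$ gives $t^{+} = a$, and in an application the arguments are duplicated but the head terminal is unchanged), the $\bot$-transformation commutes with $(\cdot)^{+}$ on ground terms: $(s^{+})^{\bot} = (s^{\bot})$, because a head non-terminal $F^{\sigma}$ maps to $\bot$ just as $F$ does, and a head terminal is preserved with its (decoded) subtrees. Hence the monotone sequence of $\bot$-transforms along an IO derivation of $G'_{t^+}$ is, componentwise, the sequence of $\bot$-transforms along the corresponding IO derivation of $G_t$, and conversely; taking suprema over all IO derivations gives $\valtree{G'_{t^+}}_{IO} = \valtree{G_t}_{IO}$. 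The main obstacle I anticipate is the bookkeeping in the simulation lemma: making the induction on the right-hand side $e$ go through cleanly requires being careful that the chosen orderings $\vec\sigma^{\tau}_1,\dots,\vec\sigma^\tau_{\nbvar\tau}$ line up between the rule's left-hand side, the substitution, and the inductive construction of $e^{+}_{\Theta^\variables}$ — i.e. that "the $i$-th copy of an argument" always means the copy annotated by the $i$-th tuple in the fixed ordering — and that a variable $x_j : \tau_j$ of higher type is replaced not by a single term but by the block of $\nbvar{\tau_j}$ annotated copies of $s_j$, matching exactly the block expected by the pattern. Everything else is routine induction on term structure.
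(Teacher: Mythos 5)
Your proposal is correct and follows exactly the route the paper intends (the paper omits the proof, but its remark that $G'$ is ``just a labeling version of $G$'' that ``acts the same'' is precisely your step-by-step simulation): the annotation invariant, the substitution lemma showing $(e_{[x_j\mapsto s_j]})^+ = e^+_{\Theta^\variables}$ with the annotated copies substituted in, preservation of the IO side-condition in both directions, and commutation of $(\cdot)^\bot$ with $(\cdot)^+$ (which works because terminals have ground-type arguments and $\nbvar{o}=1$, so no duplication occurs under terminal heads). Nothing further is needed.
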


We transform $G'$ into the scheme $G''$ that will directly turn into $\bot$ a redex $t$ such that $q_\bot \in \sem{t}$. For technical reason, instead of adding $\bot$ we add a non terminal $Void:o$ and a rule $Void \rightarrow Void$. $G'=\langle \variables', \termi,\nontermi'\uplus\{Void:o\},\rewrite'', S \rangle$ such that $\rewrite''$ contains the rule $Void \rightarrow Void$ and for all $F \in \nontermi$, if $q_\bot \in \sem{F}\ \sigma_1 \ ... \ \sigma_k$ then 
$ F^{\sigma_1,...,\sigma_k} \ x_1^{+\vec \sigma^{\tau_1}_1} ... \ x_1^{+ \vec \sigma^{\tau_1}_{\nbvar{\tau_1}}}  ... \ x_k^{+ \vec \sigma^{\tau_k}_1} ... \ x_k^{+\vec \sigma^{\tau_k}_{\nbvar{\tau_k}}}  \rightarrow  Void$ otherwise we keep the rule of $\rewrite'$.\medskip

  The following theorem concludes Section~\ref{IO2OI}.
 
 \begin{theorem}[IO vs OI] Let $G$ be a higher-order recursion scheme. Then one can construct a scheme $G''$ having the same order of $G$ such that $\valtree{G''}=\valtree{G}_{IO}$.
 \end{theorem}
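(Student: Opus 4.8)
The plan is to verify that the scheme $G''$ just constructed has the two required properties: it has the same order as $G$, and $\valtree{G''} = \valtree{G}_{IO}$. The order claim is the easy part. By induction on types one checks that $\order(\tau^+) = \order(\tau)$, using that $\nbvar{\tau_i} \geq 1$ always (the empty conjunction $\bigwedge\{\}$ is a conjunctive mapping of every type), so replacing an argument type by finitely many copies of its transform does not raise the order; and the step from $G'$ to $G''$ only rewrites right-hand sides. Hence $\order(G'') = \order(G') = \order(G)$.

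For the value tree I would first reduce to $G'$ via Theorem~\ref{Equivalence between $G$ and $G'$} applied to $t = S$ (so that $S^+ = S$), which gives $\valtree{G'}_{IO} = \valtree{G}_{IO}$; it then suffices to show $\valtree{G''} = \valtree{G'}_{IO}$, which I would split into $\valtree{G''}_{IO} = \valtree{G'}_{IO}$ and $\valtree{G''} = \valtree{G''}_{IO}$.

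For $\valtree{G''}_{IO} = \valtree{G'}_{IO}$: the schemes $G'$ and $G''$ differ only in that, for every $F^{\sigma_1,\dots,\sigma_k}$ with $q_\bot \in \sem{F}\,\sigma_1 \cdots \sigma_k$, its rule in $\rewrite'$ is replaced by one whose right-hand side is $Void$, and the rule $Void \to Void$ is added. By construction the annotations carried by a reachable term of $G'$ are exactly the semantics of the corresponding subterms, so these redexes are precisely the images of redexes $r$ of $G$ with $q_\bot \in \sem{r}$, i.e. (Theorem~\ref{Soundness and Completeness}) with $\mathcal{P}_\bot(r)$, i.e. $\valtree{G_r}_{IO} = \bot$; by Theorem~\ref{Equivalence between $G$ and $G'$} the matching $G'$-redex also has $IO$ value tree $\bot$. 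Since the $IO$ condition is context-independent, any $IO$ reduction inside such a redex is an $IO$ reduction of the redex itself, so the redex keeps a non-terminal head forever along every $IO$ derivation and contributes only $\bot$; redirecting it to $Void$ (which also yields $\bot$) changes neither the left- nor the right-hand side, giving the equality.

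It then remains to prove $\valtree{G''} \sqsubseteq \valtree{G''}_{IO}$ (the reverse inequality always holds), i.e. that $G''$ is self-correcting. Here I would use that every redex occurring in a term reachable in $G''$ is either a corrected one, whose rule goes to $Void$ and hence produces $\bot$ in every derivation mode, or a redex $F^{\vec\sigma}\,\vec s$ with $q_\bot \notin \sem{F}\,\vec\sigma$; in the latter case the $G$-counterpart $r$ has $\neg\mathcal{P}_\bot(r)$, so $\valtree{G_r}_{IO} \neq \bot$, which — as a non-terminal-headed term keeps $\bot$ as its $\bot$-transform until its head is contracted, and an $IO$ head contraction requires all arguments to be redex-free — forces each argument of $r$ to admit an $IO$ reduction to a redex-free term. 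Therefore, given any finite unrestricted (equivalently $OI$, since $\valtree{G''} = \valtree{G''}_{OI}$) derivation of $G''$ producing a terminal node at some position $p$, the redexes met along the spine down to $p$ are all of the non-corrected kind, and one can permute the derivation into an $IO$ derivation that performs the innermost argument reductions before each spine head-contraction while still reaching that node; taking suprema over all finite prefixes yields $\valtree{G''} \sqsubseteq \valtree{G''}_{IO}$. Chaining the equalities, $\valtree{G''} = \valtree{G''}_{IO} = \valtree{G'}_{IO} = \valtree{G}_{IO}$. The main obstacle is this last step: turning the informal permutation into a rigorous standardization argument tailored to the $IO$ policy — one must check it is productivity-preserving, handles residuals of duplicated or discarded subterms, and passes correctly to the limit; the annotations cause no trouble there, since $\sem{\cdot}$ is invariant under $IO$ steps.
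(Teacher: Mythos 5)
Your outer decomposition coincides with the paper's: order preservation of $(\cdot)^+$, the reduction $\valtree{G'}_{IO}=\valtree{G}_{IO}$ via the equivalence theorem between $G$ and $G'$, the step $\valtree{G''}_{IO}=\valtree{G'}_{IO}$ (corrected redexes have IO value tree $\bot$, so sending them to $Void$ changes nothing under IO), and finally the self-correction claim $\valtree{G''}=\valtree{G''}_{IO}$. The first three parts are fine and essentially what the paper does.

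The gap is in the last and crucial step. You attack the hard inclusion $\valtree{G''}\sqsubseteq\valtree{G''}_{IO}$ head-on, by permuting an arbitrary derivation of $G''$ into an IO derivation producing the same terminal node --- a standardization theorem tailored to IO. You yourself flag this as ``the main obstacle'' and do not carry it out; but it is not a routine technicality to be deferred: IO standardization fails in general (that failure is the very subject of the paper), so any such permutation argument must invoke the correction at every step, and your sketch also silently needs the annotations of $G''$-terms to keep matching the semantics of their current arguments along \emph{unrestricted} derivations, whereas $\sem{\cdot}$ is only known to be invariant under IO steps (your closing remark cites exactly the wrong invariance). The paper avoids all of this by arguing the opposite, much easier direction: whenever $\bot$ occurs at a node $u$ of $\valtree{G''_t}_{IO}$, that $\bot$ traces back to a subterm with $q_\bot$ in its semantics, hence to a corrected redex whose \emph{only} derivation in $G''$ is $t\rightarrow Void\rightarrow Void\rightarrow\dots$, so $\bot$ also occurs at $u$ in $\valtree{G''_t}$; combined with the inequality $\valtree{G''}_{IO}\sqsubseteq\valtree{G''}$, which always holds, the two trees can then differ nowhere, giving $\valtree{G''}=\valtree{G''}_{IO}$ with no permutation of derivations at all. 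You should replace your standardization plan by this ``$\bot$-propagation plus $\sqsubseteq$'' argument (or else actually prove the IO standardization you appeal to, which is a substantially harder undertaking than the theorem itself requires).
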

 
 \begin{proof}[Proof (Sketch)]
  First, given a term $t:o$, one can prove that $\valtree{G''_{t^+}}_{IO} = \valtree{G'_{t^+}}_{IO}$. 
  
  Then take a redex $t$ such that $\valtree{G''_t}_{IO}= \bot$, i.e. $q_\bot \in \sem{G_t}$. There is only one OI derivation from $t$: $t \rightarrow Void \rightarrow Void \rightarrow...$, then $\valtree{G''_t}= \bot$. We can extend this result saying that if there is the symbol $\bot$ at node $u$ in $\valtree{G''_t}_{IO}$, then there is $\bot$ at node $u$ in $\valtree{G''_t}$. Hence, since $\valtree{G''_t}_{IO} \sqsubseteq \valtree{G''_t}$, we have $ \valtree{G''} =\valtree{G''}_{IO} $. Then $\valtree{G''}=\valtree{G''}_{IO} = \valtree{G'}_{IO} =\valtree{G}_{IO}$.
 
 \end{proof}

\section{Conclusion}
	We have shown that value trees obtained from schemes using innermost-outermost derivations (IO) are the same as those obtained using unrestricted derivations. More precisely, given an order-$n$ scheme $G$ we create an order-$(n+1)$ scheme $\barre G$ such that $\valtree{\barre G}_{IO} = \valtree{G}$. However, the increase of the order seems unavoidable. We also create an order-$n$ scheme $G''$ such that $\valtree{\barre G''} = \valtree{G}_{IO}$. In this case the order does not increase, however the size of the scheme deeply increases while it remains almost the same in $\barre G$.

\bibliographystyle{eptcs}
%\bibliography{abbrevs}
\bibliography{biblio}

\newpage

%
%\begin{appendices}
%	\input{OI2IO}
%	\input{kobaIO}
%	\input{selfcorrecting}
%\end{appendices}

\end{document}